%% file: main.tex
\documentclass[sigplan,nonacm]{acmart}

\usepackage{listings}
\usepackage{listings-rust}

\usepackage{tikz}
\usetikzlibrary{arrows.meta,positioning,fit,backgrounds,calc,decorations.pathreplacing}
\usepackage{booktabs}
\usepackage{subcaption}
\usepackage{xspace}
\usepackage{enumitem}

\AtBeginDocument{%
  \setlength{\textfloatsep}{12pt plus 2pt minus 2pt}%
  \setlength{\dbltextfloatsep}{12pt plus 2pt minus 2pt}%
  \setlength{\floatsep}{10pt plus 2pt minus 2pt}%
  \setlength{\intextsep}{10pt plus 2pt minus 2pt}%
  \raggedbottom
}

\begin{document}

\title{Fearless Concurrency on the GPU}

\author{Melih Elibol}
\email{melibol@nvidia.com}
\affiliation{%
  \institution{NVIDIA}
  \country{USA}
}

\author{Jared Roesch}
\email{jroesch@nvidia.com}
\affiliation{%
  \institution{NVIDIA}
  \country{USA}
}

\author{Isaac Gelado}
\email{igelado@nvidia.com}
\affiliation{%
  \institution{NVIDIA}
  \country{USA}
}

\author{Eric Buehler}
\email{eric@huggingface.co}
\affiliation{%
  \institution{Hugging Face}
  \country{USA}
}

\author{Michael Garland}
\email{mgarland@nvidia.com}
\affiliation{%
  \institution{NVIDIA}
  \country{USA}
}

\begin{abstract}
\input{sections/abstract}
\end{abstract}

\maketitle

\input{sections/introduction}
\input{sections/overview}
\input{sections/design}
\input{sections/implementation}
\input{sections/evaluation}
\input{sections/related_work}
\input{sections/conclusion}

\input{sections/acknowledgements}

\bibliography{references}

\input{sections/appendix}

\end{document}

%% file: sections/abstract.tex
Rust has made safe systems programming practical on the CPU, but
writing custom GPU kernels in Rust still forces programmers outside
the language's ownership guarantees. We present cuTile Rust, a
tile-based system for safe, idiomatic GPU kernel authoring in Rust.
cuTile Rust extends Rust's ownership discipline to tile-based GPU
kernels: mutable outputs are split into disjoint pieces, kernel
launches preserve the host-side ownership contract, and programmers can
opt out locally when they need lower-level control.
The system also provides a composable host execution model spanning
synchronous launches, asynchronous pipelines, and CUDA graph replay.

Our evaluation shows that these abstractions can preserve performance
on high-end GPUs. On the NVIDIA B200 GPU, cuTile Rust achieves
7~TB/s for element-wise operations and 2~PFlop/s for GEMM
(96\% of cuBLAS), matching cuTile Python within measurement noise.
Grout, a
cuTile-Rust-based inference engine, exercises cuTile Rust across an
end-to-end Qwen3 inference path. In batch-1 decode, Grout reaches
171 generated tokens/s for Qwen3-4B on the NVIDIA GeForce RTX~5090 and
82 generated tokens/s for Qwen3-32B on the B200, competitive with vLLM and SGLang and
consistent with an HBM roofline sanity check.

%% file: sections/introduction.tex
\section{Introduction}
\label{sec:introduction}

The Rust language makes programming with static safety guarantees both
practical and efficient.  These qualities are particularly attractive in
applications that must coordinate the activity of many threads working
on shared data objects.  Modern AI frameworks and LLM inference engines
are both important examples of such applications.  Several Rust-based
projects in these areas have been
developed~\cite{candle,burn,mistralrs}, focusing on host-side tensor
programming and engine integration.
Such applications perform computations where parallelism is abundant,
and significant portions of their workloads are built around tensor
operations.  Consequently, they have a strong desire to use GPUs for
these computations.  However, GPUs do not fit comfortably into the
current Rust ecosystem.

GPU-accelerated applications contain code for both the CPU (host) and
GPU (device) processors.  The host code allocates data, builds kernel
arguments, and submits work for execution on the device.  Host and
device code must typically execute asynchronously for best performance.
Some means of transmitting ownership information across the host/device
boundary is necessary for the type information that made the host
program safe to be carried across to the GPU kernel.  Device kernels
themselves execute code in a single-program multiple-data (SPMD)
fashion.  Many threads begin execution at the same entry point with the
same arguments, and they are differentiated only by access to a
per-thread coordinate.  Correct synchronization and prevention of data
races is left in the hands of the programmer.

The Rust compiler (\texttt{rustc}) is able to generate device code for
NVIDIA GPU's via the LLVM PTX backend.  Other projects have experimented
with other code generation paths~\cite{rust_cuda,rust_gpu,cuda_oxide},
further demonstrating that the Rust language can be compiled to
efficient GPU machine code.  However, in each of these cases the kernels
being compiled are treated as \lstinline|unsafe| code.

With cuTile Rust, we aim to create a programming model that provides
Rust's safety guarantees across both host and device code.
We adopt a tile-based model for kernels~\cite{cutile_python, cuda_tile}.
Tiles are immutable array-like values of fixed size.  Kernels consist of
a grid of tile programs, each of which is modeled as a single logical
thread of execution over tiles of data.  Loading from a tensor in memory produces
a tile, tile operations produce new tiles, and stores target mutable
sub-tensors in memory.
For grids that require more than one output sub-tensor per program, cuTile
Rust provides branded partition indices and bounded dimension iterators,
letting the front end prove that common partition accesses are bounded
and disjoint.
We define tensor types that allow immutable tensors to be passed from
host to device as is, while mutable tensors are partitioned before
launch and mapped to mutable sub-tensors within the kernel grid.
For efficiency, our compiler removes dynamic checks when the invariants
are known, and we use iterator types to carry the disjoint bounds
information needed for partition access.
The tensor operations performed in host code compose into lazy GPU work
that can be executed synchronously, submitted asynchronously to a
stream, or captured as a CUDA Graph for later replay.

We make the following contributions:
\begin{enumerate}[topsep=5pt,itemsep=4pt,parsep=1pt,leftmargin=1.4em]
\item A safe, high-performance model for Rust in which host tensors map
    to device-side tensor and partition
    views, tile kernels execute with single-threaded semantics, and
    branded bounded indices let the front end prove common partition
    accesses safe while avoiding dynamic checks in hot loops.
\item A collection of explicit unchecked types that provide the means to opt
    out of the safe tensor API for unsafe code that requires complete
    low-level control.
\item A macro system that automatically generates host launching code
    that prepares host tensors and borrows for
    kernel execution, prevents host access while GPU work is in flight,
    and returns ownership in the same form after the stream completes.
  \item Lazy composition of device operations supporting
    synchronous execution,
    async execution, and scoped CUDA graph capture over the same typed
    kernel launches.
\end{enumerate}

The resulting system provides both expressive and efficient mechanisms
for programming GPUs within safe Rust.  Our benchmarks on an NVIDIA DGX
B200 reach roughly 7~TB/s for element-wise operations and 2~PFlop/s for
GEMM, which is within about 96\% of cuBLAS performance.  Our
empirical results match cuTile Python performance within measurement
noise.  Furthermore, we measure end-to-end performance on Grout, a Qwen3
LLM inference engine built upon cuTile Rust, and demonstrate that
throughput is consistent with an HBM roofline sanity check.

%% file: sections/overview.tex
\section{Overview}
\label{sec:overview}

In this section, we walk through a complete cuTile Rust program
showing a safe host API, a typed launch
boundary, and backend tensor and partition views.
We start with a complete element-wise add program. The example shows
the user-facing model: host code constructs tensors and partitions the
mutable output, while device code runs a tile program for each output
partition.

\begin{figure}[tb]
\begin{lstlisting}[caption={Element-wise addition in cuTile Rust. The host
  (lines~19--25) partitions the output and passes inputs as shared reads;
  the kernel (lines~4--17) operates on tiles with
  single-threaded semantics. No \texttt{unsafe} appears anywhere.},
  label={lst:vec_add}]
use cutile::prelude::*;

#[cutile::module]
mod kernel {
  use cutile::core::*;

  #[cutile::entry()]
  fn add<const B: i32>(
    z: &mut Tensor<f32, {[B]}>,   // exclusive write
    x: &Tensor<f32, {[-1]}>,      // shared read
    y: &Tensor<f32, {[-1]}>,      // shared read
  ) {
    let tx = load_tile_like(x, z);
    let ty = load_tile_like(y, z);
    z.store(tx + ty);
  }
}

fn main() -> Result<()> {
  let x = api::ones::<f32>([1024]);
  let y = api::ones::<f32>([1024]);
  let z = api::zeros::<f32>([1024]).partition([128]);
  let (_z, _x, _y) = kernel::add(z, x, y).sync()?;
  Ok(())
}
\end{lstlisting}
\end{figure}

Listing~\ref{lst:vec_add} separates host-side tensor setup from the
device-side tile program. On the host (lines~19--25), the program creates
two input tensors, partitions the mutable output into 128-element chunks,
and launches the generated operation with \lstinline|.sync()|. The launch
holds the tensors while GPU work is in flight and returns them only after
the stream completes.

The kernel signature (lines~7--11) carries the access discipline into the
device code. The output parameter is an exclusive mutable tensor tile
(\lstinline|&mut Tensor|), while the inputs are shared read-only tensors
(\lstinline|&Tensor|). The body (lines~13--15) loads input tiles matching
the output partition with \lstinline|load_tile_like|, computes a new tile,
and stores it through the exclusive output reference.

\subsection{Rust's Ownership Model}
\label{sec:bg:rust}

The relevant Rust rule is aliasing XOR mutability: A value may have
either one mutable reference (\lstinline|&mut T|) or any number of
immutable references (\lstinline|&T|), but not both simultaneously.
This rule is zero-cost and data-race-preventing, but it is deliberately
conservative for parallel writes. Rust's answer is not to make every
pattern safe by default; it provides safe abstractions where the
compiler can verify the invariant, runtime checks when the invariant is
not known statically, and explicit \lstinline|unsafe| escape hatches
where the programmer supplies an invariant manually. On GPUs, pervasive
runtime checks are especially costly because they sit on hot per-tile or
per-element paths, so cuTile Rust follows Rust's structure while trying
to prove common tensor invariants before launch.

\subsection{GPU Programming Model}
\label{sec:bg:gpu-programming-model}

GPU programming splits host-side launch from device-side execution:
Rust's typed values collapse to raw pointers and scalars at the launch
boundary, while device kernels run many SPMD/SIMT threads whose varying
integer coordinates, hierarchical memory, and explicit synchronization
make address disjointness and visibility programmer-managed invariants.
A tile-level abstraction narrows that safety problem. cuTile Rust keeps
CUDA's host launch and device execution structure, but raises the device
program to a tile-level abstraction:
The programmer writes sequential code over multi-dimensional tiles, and
the compiler maps tile operations to thread blocks, manages shared memory,
and performs the parallel decomposition.
This is a powerful GPU programming model: element-wise kernels,
matrix multiplication, reductions, normalization, attention kernels, and
many fused tensor operations are naturally tile-shaped. The safety payoff
is that the important operations are loads, stores, matrix multiplies,
and reductions over whole tiles, not arbitrary per-thread shared-memory
protocols. Combined with host-side tensor partitioning, that structure
makes ownership over mutable output regions tractable to check.

\subsection{Memory Ordering}
\label{sec:bg:tile-ir}

GPU compilers can reorder memory operations whenever the language and the
IR memory model permit it. CUDA C++ exposes the required ordering through
barriers, fences, atomics, and other synchronization operations; Tile
IR~\cite{cuda_tile} exposes it through \emph{tokens}. A token produced by
one memory operation and consumed by another establishes the ordering edge
between them, while operations without token dependencies may be reordered.
cuTile Rust uses these token chains to make mutable tensor operations
internally synchronized, rather than requiring the programmer to supply an
external synchronization protocol. This fits Rust's type system:
exclusive tensor accesses carry ordering through the mutable view, while
shared tensor reads remain free for the compiler to reorder.

This token model lets cuTile Rust preserve only the ordering implied by
Rust's reference types:

\begin{itemize}
  \item \lstinline|&mut Tensor| $\to$ token-ordered operations. The
    compiler threads tokens through every load and store, preserving
    sequential semantics.
  \item \lstinline|&Tensor| $\to$ unconstrained operations. No tokens. The
    compiler reorders freely for performance.
\end{itemize}

%% file: sections/design.tex
\section{Design}
\label{sec:design}

Our safety model covers the host, the launch boundary, and the kernel,
with per-kernel opt-outs where the programmer chooses. On the
host, a composable execution model builds lazy typed GPU work
(\S\ref{sec:design:execution}). At the launch boundary, generated host
interfaces and device entry code preserve Rust's ownership and aliasing
invariants as host types become kernel parameters
(\S\ref{sec:design:h2d}). Inside the kernel, tensor
partitioning and Tile~IR token ordering provide disjoint mutable views
and intra-tile happens-before (\S\ref{sec:design:partitioning}). Mapped
partitions and bounded iterators extend the same safe surface beyond
one-output-sub-tensor-per-program kernels to kernels in which each tile program
loops over a bounded sequence of output sub-tensors
(\S\ref{sec:design:escape}). Unsupported patterns remain explicit unsafe
opt-outs.

\subsection{Programming Model Overview}
\label{sec:design:overview}

A cuTile Rust program consists of a \emph{tile module}
(\lstinline|#[cutile::module]|) containing device-side functions. These
functions may call one another from device code. Functions marked with
\lstinline|#[cutile::entry()]| are \emph{entry points}: cuTile Rust
generates typed host-side launch interfaces for them, and only these entry
points are directly invocable from host code. Host code creates tensors,
partitions mutable outputs, and launches kernels through these generated
typed entry points. The generated host interface and device entry code
define the host--device launch contract, preserving Rust's ownership and
aliasing invariants for safe parameters.
Figure~\ref{fig:type_grammar} defines the kernel parameters used by
cuTile Rust. The proc macro enforces this grammar at compile time.

\begin{figure}[t]
\centering
\footnotesize
\begin{tabular}{@{}r@{~}c@{~}l@{}}
\emph{T} & $::=$ & \texttt{bool} $\mid$ \texttt{int} $\mid$ \texttt{float} \\[4pt]
\emph{D} & $::=$ & $n \in \mathbb{Z}^{+}$ $\mid$ \texttt{-1} \\[4pt]
\emph{S} & $::=$ & [$D_0, D_1, \ldots, D_{N-1}$] $\mid$ [\emph{D}\texttt{;}\,$N$], $N \in \mathbb{Z}_{\ge 0}$ \\[4pt]
\emph{Param} & $::=$ &
  \texttt{\&mut Tensor<}$T$\texttt{, }$S$\texttt{>}
  $\mid$ \texttt{\&Tensor<}$T$\texttt{, }$S$\texttt{>} \\
             & $\mid$ & \texttt{MappedPartitionMut<}$T$\texttt{, }$S$\texttt{, }$M$\texttt{>} \\
             & $\mid$ & $T$ $\mid$ \texttt{*mut\,}$T$ \\
\end{tabular}
\caption{Type grammar at the kernel launch boundary. $T$ ranges over scalar
  element types; current upstream formats include booleans, integers, and GPU
  floats. For dimensions, $n \in \mathbb{Z}^{+}$ denotes a static const generic
  and \texttt{-1} denotes a dynamic runtime dimension; a shape may mix the two
  (e.g., \texttt{[1024, -1]}). $N = 0$ is a rank-0
  (scalar) tensor or tile, distinct from the \emph{Param} scalar form $T$.
  $S$ and $M$ are dimension arrays; in \lstinline|MappedPartitionMut|, $S$ is
  the output sub-tensor shape and $M$ is the partition-map parameter, which
  controls how tile programs traverse the grid of output sub-tensors.}
\label{fig:type_grammar}
\end{figure}

\subsection{Device Operations}
\label{sec:design:execution}

Host-side GPU work is built around \lstinline|DeviceOp|, a
trait for lazy, typed GPU work. A \lstinline|DeviceOp| owns or borrows its
operands, carrying the lifetimes required by the pending GPU work, exposes
the output type that will be returned, and composes with other operations
before anything is submitted to CUDA\@.
This trait separates \emph{building} work from \emph{executing} work, which
lets the type system
check a whole launch sequence before the GPU sees it. Like Rust's iterators,
\lstinline|DeviceOp| builds work \emph{statically}: its
combinators compose a nested operation type at compile time, and nothing runs
until a terminal consumer drives it. Composition is monadic, with
\lstinline|then| acting as bind to sequence a dependent operation onto a prior
one. The same composed work
can run synchronously, through any Rust async executor, or be captured as a
CUDA graph. Listing~\ref{lst:deviceop} sketches the execution boundary:

\begin{figure}[t]
\begin{lstlisting}[caption={The \texttt{DeviceOp} trait and the \texttt{execute}
  generated for the \texttt{add} launcher of Listing~\ref{lst:vec_add}.
  \texttt{execute} is unsafe because the output's device writes may still be in
  flight; safe \texttt{sync} runs it on a stream, synchronizes, and only then
  returns the recovered arguments.},
  label={lst:deviceop}]
trait DeviceOp {
  type Output: Send;
  unsafe fn execute(self, cx: &ExecutionContext)
    -> Result<Self::Output>;
  fn sync(self) -> Result<Self::Output> {
    let cx = ExecutionContext::new(next_stream()?);
    let out = unsafe { self.execute(&cx) }?;
    cx.stream().synchronize()?;
    Ok(out)
  }
}

// proc macro builds AddLaunch + DeviceOp impl
impl DeviceOp for AddLaunch {
  type Output = (Partition<Tensor<f32>>, Tensor<f32>, Tensor<f32>);
  unsafe fn execute(self, cx: &ExecutionContext) -> Result<Self::Output> {
    let (z, x, y) = self.args.execute(cx)?;
    let f = self.cached_or_jit(cx)?;
    launch(f, &z, &x, &y, cx.stream());
    Ok((z.recover(), x.recover(), y.recover()))
  }
}
\end{lstlisting}
\end{figure}

The \lstinline|.sync()| call in Listing~\ref{lst:vec_add} runs the launcher's
\lstinline|execute| method, which takes an execution context (the CUDA stream, device,
and memory pool). It first runs the operations that produce the kernel's
arguments: in Listing~\ref{lst:vec_add}, the \lstinline|zeros| and
\lstinline|ones| allocations and the \lstinline|partition| call, which
asynchronously allocate \lstinline|z|, \lstinline|x|, and \lstinline|y| and
partition the output \lstinline|z|, all on the stream. It then fetches the
kernel from the JIT cache or compiles it, sequences the launch on the same
stream so it runs only after those inputs are ready, and recovers the arguments
in their host types. Because the launch is asynchronous, the returned tensors
may name memory whose writes are still in flight, so \lstinline|execute| cannot
be exposed as a safe operation. The safe \lstinline|sync| method restores
Rust's usual boundary: it runs \lstinline|execute| on a stream, synchronizes,
and only then returns the recovered output. This general pattern ensures that
\lstinline|DeviceOp| composition is safe.

Combinators include \lstinline|then| (chain with data dependency),
\lstinline|zip!| (combine
independent ops), \lstinline|shared| (cloneable, execute-once),
\lstinline|map|, \lstinline|first|, \lstinline|boxed|. Chaining with
\lstinline|then| submits the dependent operation to the same stream, so it
observes the prior operation's writes without explicit synchronization. The
\lstinline|DeviceOp| returned by the generated launcher uses the
\lstinline|recover()| protocol defined below
(Table~\ref{tab:type_mapping}) to return
ownership to the caller after execution completes.

The work represented by a \lstinline|DeviceOp| can be executed in three
primary modes:

\begin{itemize}
  \item \lstinline|.sync()|: Execute and block until complete; the simplest
    mode for ordinary host code.
  \item \lstinline|.await| (via \lstinline|IntoFuture|): Execute and yield the
    current task until complete, for integration with async Rust.
  \item \lstinline|.graph()|: Capture the composition as a CUDA graph,
    replayed for work launched repeatedly that needs low-latency execution.
\end{itemize}

\paragraph{CUDA graphs.}
For imperative-style graph construction, \lstinline|CudaGraph::scope|
provides a closure-based API where each \lstinline|s.record(op)| records a
graph node. A recorded operation is added to the capture stream but is not
executed as an independent kernel at that point; the recorded work runs
only when the captured graph is launched. This is the key borrow-safety
fact: Releasing a borrow after \lstinline|record()| does not race with
the recorded node, because the node has not yet executed. The graph
preserves the stream order established during capture, enabling safe
buffer reuse between kernels within a single graph. The implementation
uses a \lstinline|GraphNode| marker trait to restrict capture to
non-allocating operations, so replayed graph nodes do not depend on
addresses from allocations that would be unstable across replays.

\begin{figure}[t]
\begin{lstlisting}[caption={One layer of an inference forward pass using
  \texttt{CudaGraph::scope}. Borrows are released between
  \texttt{record()} calls, enabling safe buffer reuse. The
  \texttt{view()} call creates a zero-copy \texttt{TensorView} that
  borrows \texttt{input} without copying.},
  label={lst:scope}]
let graph = CudaGraph::scope(&stream, |s| {
  let hidden = input.view(&[1, d])?;
  s.record(rms_norm(
    (&mut bufs.norm).partition([1, d]),
    &hidden, &w.norm_w, eps,
  ).generics(rms_generics))?;
  s.record(matvec(
    (&mut bufs.q).partition([bn]),
    &bufs.norm, &w.wq,
  ).generics(mv_generics))?;
  s.record(add(
    (&mut bufs.residual).partition([block]),
    &input, &bufs.q,
  ))?;
  Ok(())
})?;
\end{lstlisting}
\end{figure}

Listing~\ref{lst:scope} shows one layer of an inference forward pass
written with this API, in which \lstinline|TensorView| is a borrowed
tensor descriptor: It changes shape, stride, or slice metadata while
pointing at the same device allocation.
Several features are worth noting. The \lstinline|input.view(&[1, d])?|
call creates a view that borrows \lstinline|input| without copying. The
view is dropped after the first \lstinline|s.record()|, releasing the
borrow so \lstinline|input| can be reused by the residual \lstinline|add|.
Mutable buffers use \lstinline|(&mut bufs.norm).partition([1, d])| to
create partitions from exclusive borrows. These are ordinary Rust
borrows: While a view or partition is live, the caller cannot mutate the
underlying tensor through another path.

\subsection{Safe Host-to-Device Mapping}
\label{sec:design:h2d}

The generated launcher maps host-side Rust types to device-side kernel
parameters via the \lstinline|KernelInput|/\lstinline|KernelOutput| traits.
Each trait defines a two-phase protocol: \lstinline|prepare()| transforms
the host type into a form the launcher can pass to the GPU (relinquishing
host access), and \lstinline|recover()| returns the original type with the
same ownership semantics after execution completes. The borrow checker can
verify the entire launch-execute-return lifecycle because this protocol is
expressed as traits. Table~\ref{tab:type_mapping} summarizes the mapping.

\begin{table}[t]
\centering
\small
\caption{Host-to-device type mapping at the kernel launch boundary. Each
  row shows the host-side Rust type, the device-side kernel parameter it
  maps to, and the ownership semantics enforced by the borrow checker.
  $T$ is the element type; $S$ is the shape const generic; $M$ is
  the partition-map parameter. For compactness, \lstinline|P<X>|,
  \lstinline|MP<X>|, and \lstinline|MT<T,S,M>| abbreviate
  \lstinline|Partition<X>|, \lstinline|MappedLaunchPartition<Partition<X>>|,
  and \lstinline|MappedPartitionMut<T,S,M>|.
  Entries with \emph{move} semantics transfer the host tensor into the
  kernel launch; tile programs on the device then see the tensor through
  an exclusive \lstinline|&mut| (when partitioned) or a shared
  \lstinline|&| (when not).}
\label{tab:type_mapping}
\begin{tabular}{@{}lll@{}}
\toprule
\textbf{Host type} & \textbf{Device type} & \textbf{Semantics} \\
\midrule
\lstinline|P<Tensor<T>>|           & \lstinline|&mut Tensor<T, S>| & move; exclusive \\
\lstinline|P<&mut Tensor>|         & \lstinline|&mut Tensor<T, S>| & borrow; exclusive \\
\lstinline|MP<Tensor<T>>|          & \lstinline|MT<T, S, M>|       & move; mapped excl. \\
\lstinline|MP<&mut Tensor>|        & \lstinline|MT<T, S, M>|       & borrow; mapped excl. \\
\lstinline|Tensor<T>|              & \lstinline|&Tensor<T, S>|     & move; shared ref. \\
\lstinline|Arc<Tensor<T>>|         & \lstinline|&Tensor<T, S>|     & shared; immutable \\
\lstinline|&Tensor<T>|             & \lstinline|&Tensor<T, S>|     & borrow; non-\lstinline|'static| \\
\lstinline|&TensorView<T>|         & \lstinline|&Tensor<T, S>|     & borrow; 0-copy view \\
\bottomrule
\end{tabular}
\end{table}

A key invariant is \emph{same-type-in-same-type-out}: The type returned
after kernel execution matches the type passed in. If the caller passes an
\lstinline|Arc<Tensor<T>>|, they get an \lstinline|Arc<Tensor<T>>| back.
If they pass a \lstinline|&Tensor<T>| borrow, they get the borrow back.
Because borrowed tensors are not owned values, async task APIs that
require owned, independently-live operations reject them at compile time;
callers use \lstinline|Arc| when a tensor must be shared with such a task.

\paragraph{Kernel entry generation.}
For each source entry point, the JIT generates a device entry wrapper. For
\lstinline|add| it is:

\begin{lstlisting}
// generated device entry for `add`
fn add_entry(z: *mut f32, z_meta: Meta,
             x: *const f32, x_meta: Meta,
             y: *const f32, y_meta: Meta) {
  let z = partition_view(z, z_meta);
  let x = tensor_view(x, x_meta);
  let y = tensor_view(y, y_meta);
  add(z, x, y); // user kernel body
}
\end{lstlisting}

\noindent \lstinline|add_entry| establishes the host/device ABI for cuTile
Rust's tensor types. A CUDA kernel launch passes only raw pointers and scalars
and has no notion of a tensor or partition, so cuTile Rust defines how each is
encoded as kernel parameters. The launcher of \S\ref{sec:design:execution}
marshals the pointer together with the shape, stride, and partition scalars,
and the device runs \lstinline|add_entry|, which reconstructs the matching
Tile~IR views: a partition view (\lstinline|&mut Tensor|) for the mutable
output, and shared tensor views (\lstinline|&Tensor|) for the immutable inputs.
It then calls \lstinline|add(z, x, y)|, the exact user-defined kernel of
Listing~\ref{lst:vec_add}.

\subsection{Partitioning and Ordering}
\label{sec:design:partitioning}
\label{sec:design:tokens}

A safe tensor view must satisfy two obligations. Across tile programs,
mutable accesses must be disjoint; within one tile program, mutable
operations must preserve the order implied by the source program. cuTile
Rust handles the first obligation by partitioning mutable tensors before
launch. In Listing~\ref{lst:vec_add}, the host splits the 1024-element output
into 128-element sub-tensors with \lstinline|.partition([128])|; on the device,
\lstinline|add_entry|'s \lstinline|partition_view| call
(\S\ref{sec:design:h2d}) hands each tile program an exclusive
\lstinline|&mut Tensor| over one output sub-tensor, exposed in Tile~IR as a
\emph{partition view}. The immutable inputs \lstinline|x| and \lstinline|y|
arrive by pointer and become shared \emph{tensor views} (\lstinline|&Tensor|)
broadcast to all programs.

The mapping from tile programs to sub-tensors is \emph{injective}: No
sub-tensor is assigned to more than one tile program. The
launcher derives the ordinary launch grid from the partition shape, and
mapped partitions validate their tile-program count against the
partition grid. Because CUDA launch grids are three-dimensional,
mutable tensors are capped at rank 3; immutable tensors are broadcast
rather than partitioned, so their rank is not bound by the launch grid.
Safe code can
construct mutable partitions only through cuTile Rust's partition APIs,
which start from an owned tensor or exclusive borrow and derive or
validate the launch metadata. \lstinline|Partition<Tensor<T>>| moves the
tensor into the launch, while \lstinline|Partition<&mut Tensor<T>>|
borrows it exclusively; in both cases, host access is unavailable until
execution completes.

\begin{figure}[t]
  \centering
  \resizebox{\columnwidth}{!}{\input{figures/src/partition_mapping}}
  \caption{Host-to-device partitioning.
    \lstinline|Partition<Tensor<T>>| dispenses an exclusive
    \lstinline|&mut T|$_i$ (= \lstinline|&mut Tensor<T, S>| for the
    $i$-th sub-tensor) to each tile program (1:1);
    \lstinline|&Tensor<T>| broadcasts a shared \lstinline|&Tensor<T, S>|
    to all tile programs. Mutable cells are tinted rose; the shared band is
    tinted teal.}
  \label{fig:partition_mapping}
\end{figure}

The second obligation is local to one tile program. Tile~IR's
token-ordered operations have no default program order
(\S\ref{sec:bg:tile-ir}), so two operations on the same mutable tensor
could otherwise be reordered. The compiler threads tokens through
mutable tensor operations in the emitted Tile~IR. Consider a kernel that
loads from and then stores to a mutable tensor~\lstinline|c|
(Listing~\ref{lst:token_example}):

\begin{figure}[tb]
\begin{lstlisting}[caption={Add-accumulate kernel. The mutable tensor
  \texttt{c} is both read and written. Without token ordering, the load
  and store could be reordered.}, label={lst:token_example}]
#[cutile::entry()]
fn add_accum<const S: [i32; 1]>(
  c: &mut Tensor<f32, S>,    // token-ordered
  a: &Tensor<f32, {[-1]}>,   // unconstrained
  b: &Tensor<f32, {[-1]}>,   // unconstrained
) {
  let tile_a = load_tile_like(a, c);  // no token
  let tile_b = load_tile_like(b, c);  // no token
  let tile_c = c.load();       // t0 -> t1
  c.store(tile_a + tile_b + tile_c); // t1 -> t2
}
\end{lstlisting}
\end{figure}

The emitted Tile~IR threads a token chain through the mutable operations:

\begin{center}
\small
$t_0 \xrightarrow{\texttt{c.load()}} t_1 \xrightarrow{\texttt{c.store()}} t_2$
\end{center}

\noindent The load of \lstinline|c| consumes the initial token~$t_0$ and
produces~$t_1$; the store consumes~$t_1$ and produces~$t_2$. This chain
establishes a happens-before relation: The store is guaranteed to observe the
load's result. Meanwhile, the loads of \lstinline|a| and \lstinline|b|
(immutable references) carry no token constraints; the Tile compiler freely
reorders them, potentially overlapping them with the mutable operations for
better memory throughput.

\begin{figure}[t]
  \centering
  \resizebox{\columnwidth}{!}{\input{figures/src/token_ordering}}
  \caption{Token ordering in Tile~IR, within one tile program.
    \lstinline|T|$_i$ is the tile program's mutable sub-tensor (the rose-tinted
    \lstinline|&mut Tensor<T, S>| of Fig.~\ref{fig:partition_mapping}).
    Operations on \lstinline|&mut T|$_i$ are chained through tokens,
    establishing happens-before. Operations on a shared
    \lstinline|&Tensor<T, S>| (teal) carry no token constraints; the
    compiler reorders them freely.}
  \label{fig:token_ordering}
\end{figure}

Together, partitioning and token ordering map Rust's reference
distinction onto Tile~IR memory operations. \lstinline|&mut Tensor|
parameters become disjoint partition views whose loads and stores are
token-ordered. \lstinline|&Tensor| parameters become shared tensor views
whose reads are unconstrained and may be reordered for performance. The
user writes sequential Rust; the backend receives only the ordering that
Rust's type system requires.

\subsection{Example: Preventing Data Races}
\label{sec:design:race}

Listing~\ref{lst:race_python} shows a cuTile Python kernel that permutes
the head dimensions of a rank-4 attention tensor~$(b, h, m, d)$.

\begin{figure}[tb]
\begin{lstlisting}[language=Python,caption={A kernel index bug: the store
  swaps \texttt{m} and \texttt{h2}. With grid $(B \cdot H, H, 1)$, $H$ tile
  threads with different \texttt{h1} race on the same destination
  \texttt{dst[b,m,h2,0]}, writing different source tiles.},
  label={lst:race_python}]
@ct.kernel # cuTile Python kernel race
def permute_heads(src, dst, H, BM, BD):
  bh1, h2 = ct.bid(0), ct.bid(1)
  b, h1 = bh1 // H, bh1 % H
  for m in range(ct.num_tiles(src, 2, (1,1,BM,BD))):
    tile = ct.load(src, (b,h1,m,0), (1,1,BM,BD))
    ct.store(dst, (b,m,h2,0), tile)  # BUG: swapped
\end{lstlisting}
\end{figure}

With grid $(B \cdot H, H, 1)$, for any fixed $(b, m, h_2)$, the $H$ tile
programs that vary $h_1$ each load a different source tile
\lstinline|src[b,h1,m,:]| and write it to the same destination
\lstinline|dst[b,m,h2,0]|. These writes are weakly ordered
(\S\ref{sec:bg:tile-ir}), so they are not ordered by a happens-before
relation. By the
Tile~IR memory model, the program has a data race and its behavior
is undefined. We confirmed this experimentally: Running the kernel
produces non-deterministic output across runs (17--35\% of elements
differ between runs on our configurations).

In cuTile Rust, \lstinline|dst| is a partition view
(\lstinline|&mut Tensor<f32, {[1, 1, BM, BD]}>|) rather than a full
tensor. Each tile program owns exactly one destination sub-tensor; the store
target is not an index the programmer chooses, but the partition view
itself (\lstinline|dst.store(tile)|). The ``swap \lstinline|m| and
\lstinline|h2| in the store'' bug is inexpressible: there is no store
index to mix up. The class of bugs that produce data races from wrong
destination indices is structurally eliminated by the partition model, and
Appendix~\ref{app:drf} proves data-race freedom for the safe API against
Tile~IR's memory model.

\subsection{Mapped Partitions}
\label{sec:design:escape}

cuTile Rust exposes a mutable output as a \emph{mapped partition}: a partition
of the output tensor together with a map from tile programs to the output
sub-tensors they own. The plain \lstinline|&mut Tensor<T, S>| of
\S\ref{sec:design:partitioning} is the one-sub-tensor-per-program special case,
where each program owns a single output sub-tensor and the launcher derives the
launch grid directly from the partition grid. \lstinline|MappedPartitionMut<T, S, M>|
generalizes it: the partition map \lstinline|M| lets each program own a bounded
\emph{sequence} of output sub-tensors. Both lower to the same Tile~IR partition
view; the mapped form only adds the obligation to prove that the sub-tensors one
program visits stay disjoint from every other program's.

This generality is what lets GEMM express an efficient schedule. With one
program per output sub-tensor, each program reloads its share of the shared
$K$-dimension operands; a mapped partition instead launches fewer programs,
each walking a bounded sequence of output sub-tensors, reusing operands
across them. Listing~\ref{lst:persistent_gemm_safe} shows this.
\lstinline|MAP_SHAPE| instantiates the partition map, and
\lstinline|z.iter_indices()| yields compiler-branded \lstinline|PartitionIndex|
values for the partition subset mapped to the executing tile program. Stores validate that the index
came from \lstinline|z|'s map before lowering to Tile~IR, so the write stays
safe even though one program now writes to many output sub-tensors.

\begin{figure}[tb]
\begin{lstlisting}[caption={GEMM using bounded output sub-tensor indices.
  \texttt{z.iter\_indices()} produces disjoint output
  \texttt{PartitionIndex} values, while the \texttt{k\_tiles}
  \texttt{Dim} iterator produces bounded indices along the $K$ axis.},
  label={lst:persistent_gemm_safe}]
#[cutile::entry()]
fn gemm<const BM: i32, const BN: i32,
        const BK: i32, const MAP_SHAPE: [i32; 2]>(
  mut z: MappedPartitionMut<f16, {[BM, BN]}, MAP_SHAPE>,
  x: &Tensor<f16, {[-1, -1]}>,
  y: &Tensor<f16, {[-1, -1]}>,
) {
  let m_tiles = num_tiles(&z, 0);
  let n_tiles = num_tiles(&z, 1);
  let k_tiles = Dim::new(x.shape()[1] / BK);
  let px = x.partition(const_shape![BM, BK])
            .with_bounds((m_tiles, k_tiles));
  let py = y.partition(const_shape![BK, BN])
            .with_bounds((k_tiles, n_tiles));

  for out_idx in z.iter_indices() {
    let (bid_m, bid_n) = out_idx.components();
    let mut tile_z = constant(f16::ZERO, const_shape![BM, BN]);
    for k_tile in k_tiles {
      let tile_x = px.load(coord((bid_m, k_tile)));
      let tile_y = py.load(coord((k_tile, bid_n)));
      tile_z = mma(tile_x, tile_y, tile_z);
    }
    z.store(tile_z, out_idx);
  }
}
\end{lstlisting}
\end{figure}

The important point is not that the dimensions are fully static. The
front end gets enough information from the iterator types themselves:
\lstinline|out_idx| is tied to \lstinline|z| and identifies a disjoint
output sub-tensor, while each \lstinline|k_tile| is a bounded index along
the $K$ axis. The compiler can therefore lower the
loads and stores without dynamic bounds checks in the hot loop while
retaining the same safe surface as ordinary partitioned kernels.

\paragraph{Escape hatches.}
Using \lstinline|unchecked_accesses| in an \lstinline|unsafe fn| disables
bounds checks the programmer can guarantee are unnecessary. For patterns the tensor API
still cannot express, raw pointers
(\lstinline|*mut T|) provide direct Tile~IR access; all such operations are
unsafe and are intended to be isolated behind small safe wrappers. Grout
(\S\ref{sec:eval:grout}) uses all three: the safe surface for its simpler
elementwise, embedding, and argmax kernels, and unchecked accesses or raw
pointers for the performance-critical attention and fused-norm kernels.

%% file: figures/src/partition_mapping.tex
%

\begin{tikzpicture}[
    >=Latex,
    thin,
    every node/.style={font=\footnotesize},
    hostbox/.style={draw=black!75, rounded corners=1.5pt,
                    minimum width=2.35cm, minimum height=0.48cm,
                    inner sep=2pt, font=\footnotesize\ttfamily},
    thead/.style={font=\footnotesize\bfseries},
    cell/.style={draw=black!70, rounded corners=1pt, fill=red!15,
                 minimum width=1.25cm, minimum height=0.48cm,
                 inner sep=2pt, font=\footnotesize\ttfamily, align=center},
    annot/.style={font=\scriptsize\itshape, text=black!55, anchor=west},
  ]

  \node[thead] (h0) {Thread 0};
  \node[thead, right=0.18cm of h0]   (h1) {Thread 1};
  \node[font=\footnotesize, right=0.18cm of h1] (hd) {$\cdots$};
  \node[thead, right=0.18cm of hd]   (hn) {Thread $N{-}1$};

  \node[cell, below=0.28cm of h0] (c00) {\&mut T$_0$};
  \node[cell] (c01) at (h1 |- c00) {\&mut T$_1$};
  \node[cell] (c0n) at (hn |- c00) {\&mut T$_{N{-}1}$};

  \node[hostbox, anchor=east] (part) at ($(c00.west) + (-0.28,0)$)
    {Partition<Tensor<T>{>}};
  \draw[->] (part.east) -- (c00.west);

  \node[annot] at ($(c0n.east) + (0.18,0)$) {exclusive (1:1)};

  \path
    let \p1=(c00.south west), \p2=(c0n.south east) in
    coordinate (band_nw) at (\x1, \y1 - 0.20cm)
    coordinate (band_se) at (\x2, \y1 - 0.62cm);

  \path[draw=black!70, rounded corners=1pt, fill=teal!15]
    (band_nw) rectangle (band_se);

  \coordinate (band_w) at ($(band_nw)!0.5!(band_nw -| band_nw)$);
  \node[font=\footnotesize\ttfamily]
    at ($(band_nw)!0.5!(band_se)$)
    {\&Tensor<T, S>};

  \coordinate (band_west_mid) at ($(band_nw)!0.5!(band_nw |- band_se)$);
  \coordinate (band_east_mid) at ($(band_se) + (0,0.21cm)$);

  \node[hostbox, anchor=east] (ref) at ($(band_west_mid) + (-0.28,0)$)
    {\&Tensor<T>};
  \draw[->] (ref.east) -- (band_west_mid);

  \node[annot] at ($(band_east_mid) + (0.18,0)$) {shared (broadcast)};

\end{tikzpicture}

%% file: figures/src/token_ordering.tex
%

\begin{tikzpicture}[
    >=Latex,
    thin,
    node distance=0.6cm,
    token/.style={draw=black!75, circle, fill=black!6,
                  minimum size=0.55cm, font=\scriptsize, inner sep=1pt},
    dtoken/.style={draw=black!55, dashed, circle, minimum size=0.55cm,
                   font=\scriptsize, inner sep=1pt, text=black!55},
    mutop/.style={draw=black!75, rounded corners=1.5pt,
                  fill=red!15, minimum width=1.3cm,
                  minimum height=0.42cm, inner sep=2pt,
                  font=\footnotesize\ttfamily},
    immop/.style={draw=black!75, rounded corners=1.5pt, fill=teal!15,
                  minimum width=1.3cm, minimum height=0.42cm,
                  inner sep=2pt, font=\footnotesize\ttfamily},
    mylabel/.style={font=\footnotesize\bfseries},
    annot/.style={font=\scriptsize, text=black!60},
  ]

  \node[mylabel] (mut_label) {\&mut T$_i$};

  \node[token, right=0.3cm of mut_label] (t0) {$t_0$};
  \node[mutop, right=0.22cm of t0] (cload) {c.load()};
  \node[token, right=0.22cm of cload] (t1) {$t_1$};
  \node[mutop, right=0.22cm of t1] (cstore) {c.store()};
  \node[token, right=0.22cm of cstore] (t2) {$t_2$};

  \draw[->] (t0) -- (cload);
  \draw[->] (cload) -- (t1);
  \draw[->] (t1) -- (cstore);
  \draw[->] (cstore) -- (t2);

  \node[annot, anchor=west] at ($(t2.east)+(0.10,0)$)
    {happens-before};

  \node[mylabel, below=1.0cm of mut_label] (ref_label) {\&Tensor<T, S>};

  \node[dtoken, right=0.3cm of ref_label] (a_in)  {$\bot$};
  \node[immop,  right=0.22cm of a_in]      (aload) {a.load()};
  \node[dtoken, right=0.22cm of aload]     (a_out) {$\bot$};

  \node[dtoken, right=0.22cm of a_out]     (b_in)  {$\bot$};
  \node[immop,  right=0.22cm of b_in]      (bload) {b.load()};
  \node[dtoken, right=0.22cm of bload]     (b_out) {$\bot$};

  \draw[->, draw=black!55] (a_in)  -- (aload);
  \draw[->, draw=black!55] (aload) -- (a_out);
  \draw[->, draw=black!55] (b_in)  -- (bload);
  \draw[->, draw=black!55] (bload) -- (b_out);

  \node[annot, anchor=west] at ($(b_out.east)+(0.10,0)$)
    {no tokens};

\end{tikzpicture}

%% file: sections/implementation.tex
\section{Implementation}
\label{sec:implementation}

Figure~\ref{fig:compilation_pipeline} shows the source-to-GPU path.
The module proc macro emits three host-binary artifacts: Rust code
corresponding to the cuTile module, which \texttt{rustc} verifies; a generated
host launch interface (\S\ref{sec:design:h2d}); and an embedded AST of the
cuTile module.
At first launch, the runtime loads the AST, compiles it to
Tile~IR, and loads the resulting cubin.

\begin{figure}[t]
  \centering
  \input{figures/src/compilation_pipeline}
  \caption{A kernel from source to GPU. The
    \texttt{\#[cutile::module]} proc macro emits desugared Rust checked
    by \texttt{rustc}, a typed host launch interface, and the kernel AST plus
    registry entry compiled into the host binary. The dashed arrow marks the
    first launch, when the runtime JIT loads the kernel AST and compiles it
    to Tile~IR and a cubin.}
  \label{fig:compilation_pipeline}
\end{figure}

The launcher uses the trait protocols described in
\S\ref{sec:design:h2d}: arguments are prepared for execution, wrapped in a
lazy \lstinline|DeviceOp|, and recovered with the same host type after the
stream completes. The generated device entry point constructs the
corresponding Tile~IR tensor views and partition metadata, so the
\lstinline|&mut|/\lstinline|&| distinction checked by Rust becomes
token-ordered mutable operations and unconstrained immutable reads in the
backend.

Embedding an AST rather than already-lowered Tile~IR is mainly a
proc-macro engineering constraint. Rust macros run before type checking
and monomorphization, while the JIT sees the later launch-time facts
needed for specialization. Rank-polymorphic traits hide the per-rank
generated structs and operation impls from user code, so kernels call
operations by their ordinary names while the implementation dispatches to
the rank that \texttt{rustc} inferred.

%% file: figures/src/compilation_pipeline.tex
%
%

\begin{tikzpicture}[
    >=Latex,
    thin,
    every node/.style={font=\footnotesize},
    src/.style={draw=black!75, rounded corners=1.5pt,
                minimum width=1.20cm, minimum height=0.40cm,
                inner sep=2pt, font=\footnotesize\ttfamily},
    macro/.style={draw=black!75, rounded corners=1.5pt,
                  fill=blue!10, minimum width=1.75cm, minimum height=0.40cm,
                  inner sep=2pt, font=\footnotesize},
    art/.style={draw=black!65, rounded corners=1.5pt, fill=blue!4,
                minimum width=1.50cm, minimum height=0.38cm,
                inner sep=2pt, font=\footnotesize, align=center},
    csink/.style={draw=black!75, rounded corners=1.5pt, fill=blue!18,
                  minimum width=1.20cm, minimum height=0.40cm,
                  inner sep=2pt, font=\footnotesize},
    rstage/.style={draw=black!75, rounded corners=1.5pt, fill=orange!18,
                   minimum width=1.20cm, minimum height=0.40cm,
                   inner sep=2pt, font=\footnotesize},
    phase/.style={font=\footnotesize\itshape, text=black!60},
  ]


  \node[src] (source) {kernel.rs};
  \node[macro, below=0.28cm of source] (pm) {\texttt{\#[cutile::module]}};
  \draw[->] (source) -- (pm);

  \node[art, below=0.50cm of pm] (launcher) {kernel launcher};
  \node[art, left=0.18cm of launcher]  (desugar) {desugared Rust};
  \node[art, right=0.18cm of launcher] (ast)     {kernel AST};

  \draw[->] (pm.south) -- (desugar.north);
  \draw[->] (pm.south) -- (launcher.north);
  \draw[->] (pm.south) -- (ast.north);

  \node[csink, below=0.38cm of launcher] (hostbin) {host binary};
  \coordinate (elbow) at (desugar.south |- hostbin.west);
  \draw[->] (desugar.south) -- (elbow) -- (hostbin.west);
  \node[font=\footnotesize\itshape, text=black!60,
        anchor=east, xshift=-2pt]
    at ($(desugar.south)!0.5!(elbow)$) {rustc};
  \draw[->] (launcher) -- (hostbin);
  \draw[->] (ast)      -- (hostbin);

  \node[phase, above=0.22cm of source] {Rust compile time};


  \node[rstage, right=1.20cm of hostbin] (ast_rt) {kernel AST};
  \node[rstage, below=0.25cm of ast_rt]  (tileir) {Tile IR};
  \node[rstage, below=0.25cm of tileir]  (cubin)  {cubin};

  \draw[->] (ast_rt) -- (tileir);
  \draw[->] (tileir) -- (cubin);

  \draw[->, dashed, draw=black!55] (hostbin.east) -- (ast_rt.west);

  \node[phase, anchor=west]
    at ($(ast_rt.east)!0.5!(cubin.east) + (0.22,0)$) {runtime JIT};

\end{tikzpicture}

%% file: sections/evaluation.tex
\section{Evaluation}
\label{sec:evaluation}

Our evaluation of cuTile Rust seeks to determine both
whether safety adds runtime overhead,
and whether the system is general enough for real workloads.
We perform experiments on two systems:
an NVIDIA DGX~B200 datacenter system, using only a single GPU,
and a workstation containing an NVIDIA GeForce RTX~5090.
The safety-overhead microbenchmarks use the B200 because it is the primary
platform for high-throughput GPU performance characterization; the
execution-mode microbenchmark uses the RTX~5090. End-to-end inference uses
Qwen3-4B on the RTX~5090 and Qwen3-32B on the B200. The safety-overhead
microbenchmarks lock SM clocks for reproducibility; inference uses
default clocks because peak request throughput is the user-visible
quantity.

\begin{figure*}[t]
  \centering
  \begin{subfigure}[b]{0.40\textwidth}
    \centering
    \includegraphics[width=\textwidth]{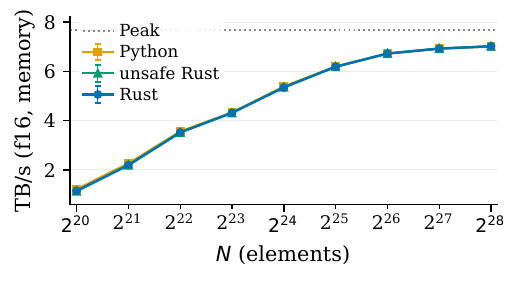}
    \subcaption{Element-wise add (memory-bound).}
    \label{fig:elemwise_safety}
  \end{subfigure}
  \hfill
  \begin{subfigure}[b]{0.40\textwidth}
    \centering
    \includegraphics[width=\textwidth]{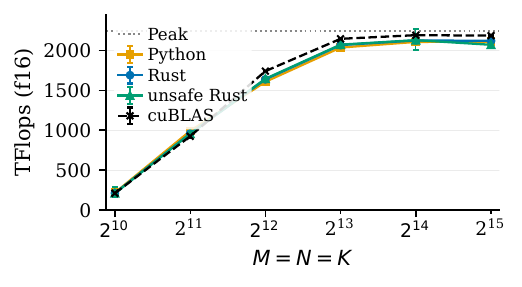}
    \subcaption{GEMM (compute-bound).}
    \label{fig:gemm_safety}
  \end{subfigure}
  \hfill
  \begin{subfigure}[b]{0.17\textwidth}
    \centering
    \raisebox{0.45em}{\includegraphics[width=\textwidth]{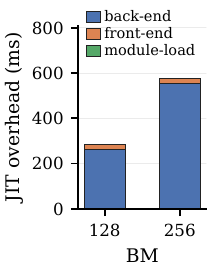}}
    \subcaption{JIT overhead.}
    \label{fig:jit_breakdown}
  \end{subfigure}
  \caption{Safety-overhead microbenchmarks and JIT overhead.
    ``Python'' and ``Rust'' are cuTile Python and cuTile Rust.
    Dotted lines show the peak. Error bars show $p_{25}$--$p_{75}$ for
    element-wise add and time standard deviation as TFlop/s for GEMM.}
  \label{fig:safety_overhead}
\end{figure*}

\subsection{Safety Overhead}
\label{sec:eval:safety}

To measure what safety costs on the GPU, we use two workloads on the B200:
a compute-bound GEMM and a memory-bound element-wise add.

GEMM (Figure~\ref{fig:gemm_safety}) is the central safety-overhead result. It
stresses tensor cores; we compare Rust, unsafe Rust, cuTile Python, and cuBLAS
over $M{=}N{=}K$ in powers of two from 1024 to 32768 in \texttt{f16}, with tile
shapes tuned once per size and shared across Rust and Python. Rust is the
mapped-partition kernel of \S\ref{sec:design:escape}
(Listing~\ref{lst:persistent_gemm_safe}), using branded output sub-tensor
indices and bounded $K$-axis iteration; unsafe Rust runs the same schedule with
manual raw-pointer output access as our local zero-cost baseline; cuTile Python
runs the same schedule on the Tile~IR backend; and cuBLAS is the vendor library
baseline from a direct cuBLASLt heuristic sweep over the same sizes. We
compare against same-backend baselines and the vendor library rather than
other tile compilers (e.g., Triton~\cite{tillet2019triton}): Unsafe Rust on
the same backend isolates what the safety machinery costs from backend code
quality, while cuBLAS and the hardware peak anchor the results in absolute
terms. At
$M{=}N{=}K{=}8192$, Rust reaches 2.07~PFlop/s (92\% of the B200's dense
\texttt{f16} peak, 96.4\% of cuBLAS) and unsafe Rust matches it within 0.3\%,
so the \lstinline|PartitionIndex|/\lstinline|Dim| formulation imposes no
measurable cost over manual pointer access; cuTile Python reaches 2.04~PFlop/s
(94.9\% of cuBLAS), placing both frontends on the same backend performance
envelope.

Element-wise add (Figure~\ref{fig:elemwise_safety}) isolates memory-bound tile
loads and stores; its plotted bandwidth counts the two \texttt{f16} reads and
one \texttt{f16} write against the device's theoretical peak DRAM bandwidth. At
$N=2^{28}$, unsafe Rust and Rust both reach 7.02~TB/s and cuTile Python reaches
7.01~TB/s, so checked tile access stays on par with unsafe Rust and near the
7.68~TB/s peak.

Figure~\ref{fig:jit_breakdown} reports the first-use compilation cost separate
from these steady-state results. Larger GEMM tile sizes produce larger
Tile~IR kernels, so this cost grows with tile size.

In both the compute-bound and memory-bound regimes, the safe tensor API stays
within measurement noise of unsafe Rust, so its safety guarantees impose no
measurable runtime overhead.

\subsection{Execution Mode Overhead}
\label{sec:eval:execution}

We next measure how host-side launch overhead scales with pipeline
length, where a \emph{pipeline} is a chain of $N$ operations, and how the
sync, async, and graph modes compare as $N$ grows. We sweep $N$ from 1 to
1000 using a small elementwise $y = x \cdot g$ kernel over a length-$2048$
f16 tile, so launch overhead dominates.

cuTile Rust exposes three execution modes
(\S\ref{sec:design:execution}): sync, async, and graph. We plot sync two
ways: individual \lstinline|.sync_on(stream)| calls, which synchronize
after every kernel, and a \lstinline|.then()| chain of $N$ launches run with
\lstinline|.sync()|, which isolates synchronization cost. Async runs the same
chain through \lstinline|.await|, yielding the host task while the GPU runs.
Graph captures the pipeline and replays all $N$ kernels with one driver call.

\begin{figure}[t]
  \centering
  \IfFileExists{figures/generated/exp2_execmode_latency.pdf}{%
    \includegraphics[width=0.86\columnwidth]{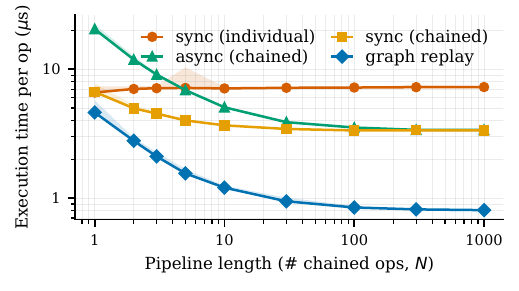}%
  }{%
    \fbox{\parbox{0.9\columnwidth}{\centering\vspace{2em}
      \textbf{[Exec mode scaling placeholder]}\\[0.4em]
      sync / async / graph, $N\in[1,1000]$.
    \vspace{2em}}}%
  }
  \caption{Execution-mode scaling on the RTX~5090 (f16).
    Per-op latency for the $N$-kernel pipeline experiment; line is min,
    band is min--$p_{75}$.}
  \label{fig:exec_mode}
\end{figure}

Figure~\ref{fig:exec_mode} separates three host-cost regimes. Individual
sync pays launch plus stream synchronization per kernel, converging to
about $7.3~\mu$s/op. Chained sync and async pay one launch per kernel and
one wait per pipeline, so they converge together to about $3.4~\mu$s/op once
async's fixed callback cost is amortized. Graph replay removes host-side
per-kernel launch overhead and approaches the GPU-side dispatch limit, about
$0.8~\mu$s/op on this hardware. The fixed costs dominate at small $N$: async
alone starts near $21~\mu$s/op at $N=1$ (its per-pipeline callback), while the
other schedules begin in the single-digit microseconds; by $N=1000$, the
modes have converged.

These modes span a cost spectrum, exposed through one trait so each caller
picks what its workload needs. \lstinline|.sync()| is ordinary stream-ordered
execution and is the baseline. Async adds a constant callback overhead in
exchange for freeing the host thread to service I/O, cancellation, and other
work while the GPU runs; this pays off when the host has concurrent work to
overlap, as in interruptible voice-to-text or agentic loops where tool calling
demands host CPU time. Results in Appendix~\ref{app:exp3_extras} show
measurable improvement in resource utilization for such heterogeneous tasks.
Graph replay eliminates per-launch host
overhead, the right choice for long, repetitious operation sequences. Once a
graph is captured, its \emph{replay} is itself a \lstinline|DeviceOp|, so it
too can be executed synchronously or asynchronously like any other operation.

\subsection{End-to-End Inference}
\label{sec:eval:grout}

Finally, we ask whether cuTile Rust can support an end-to-end LLM
inference path. Grout is a Qwen3 inference engine built on cuTile Rust,
developed as an open-source project~\cite{grout}.
It is not a general-purpose serving stack; it is designed to evaluate
how far a lean cuTile-Rust-based Qwen3 engine can push batch-1
throughput when common model-specific operations are fused and the
runtime path is kept small. Its non-GEMM kernels are safe by default for the
simpler elementwise, embedding, and argmax operations, while the
performance-critical attention and fused-norm kernels opt out via unchecked
accesses or raw pointers. In both inference
configurations, prefill executes through a cached \lstinline|StepGraph|
of typed operations, with intermediate buffers drawn from a reusable
tensor pool. Grout records the one-token decode forward pass as
\lstinline|DeviceOp| graph nodes inside \lstinline|CudaGraph::scope| and
replays it for subsequent tokens.

\begin{figure*}[t]
\centering
\begin{subfigure}[t]{0.49\textwidth}
  \centering
  \includegraphics[width=\textwidth]{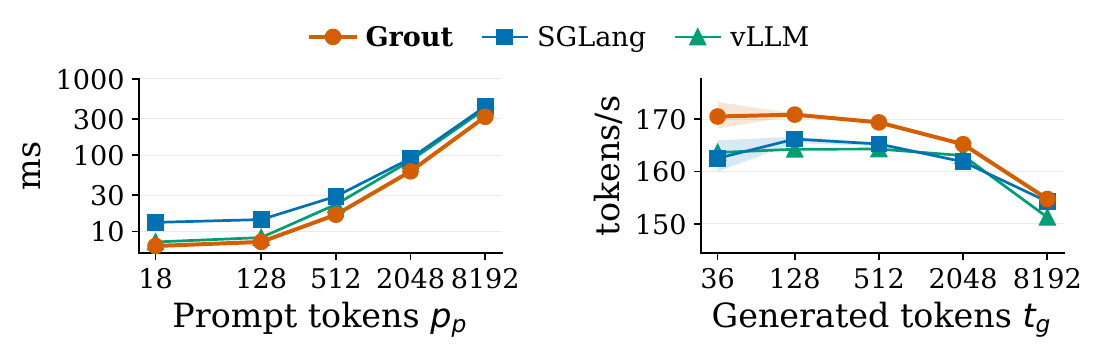}
  \subcaption{RTX~5090 / Qwen3-4B.}
  \label{fig:grout_sweep_5090}
\end{subfigure}
\hfill
\begin{subfigure}[t]{0.49\textwidth}
  \centering
  \includegraphics[width=\textwidth]{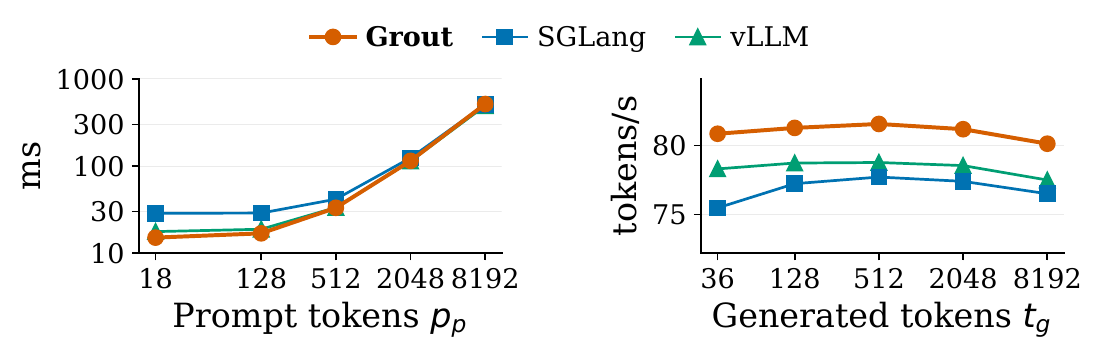}
  \subcaption{B200 / Qwen3-32B.}
  \label{fig:grout_sweep_b200}
\end{subfigure}
\caption{Qwen3 single-request performance (f16, batch~1, median with IQR
  shading; prefix caching disabled). Each subfigure shows
  prefill latency vs.\ prompt length $p_p$ at $t_g{=}36$ and
  decode throughput vs.\ generation length $t_g$ at $p_p{=}18$.
  The RTX~5090 subfigure uses 10 reps up to
  $p_p{=}512$ or $t_g{=}512$ and 3 reps for
  $p_p,t_g\in\{2048,8192\}$; the B200 subfigure uses 10 reps per cell.}
\label{fig:grout_sweeps}
\end{figure*}

To interpret decode throughput, we compute an HBM roofline. We use this
hardware- and model-normalized roofline to sanity check that our vLLM and
SGLang measurements are in the same performance regime as numbers reported for
other model and GPU architecture pairs. It estimates the generated-token rate
when decode is bound only by peak device-memory bandwidth:
\[
  R = \frac{\beta}{W + \bar{K}},
\]
where $\beta$ is the theoretical peak device-memory bandwidth, $W$ is the
model weights in bytes, and $\bar{K}$ is the average KV-cache bytes read per
generated token over the generation window.
These are single-request batch-1 measurements;
throughput under concurrent batching is a separate study. Prefill
latency comes from each engine's per-request timing when available.
vLLM~\cite{kwon2023vllm}'s offline API does not expose prefill directly,
so we use time-to-first-token from a \texttt{max\_tokens=1} probe.

Figure~\ref{fig:grout_sweeps} reports Grout's per-request performance
alongside SGLang~\cite{zheng2024sglang} and vLLM\@. On RTX~5090/Qwen3-4B,
Grout has the lowest measured prefill latency across the
prompt sweep and the highest decode throughput across the
generation sweep, reaching 154.7~tok/s at $t_g{=}8192$ (74.7\% of the
HBM roofline estimate). The throughput drop at long generation
lengths is expected for autoregressive decode:
as $t_g$ grows from 512 to 8192, the average KV-cache payload in the
roofline model grows from 0.040~GB to 0.607~GB per generated token, and
the HBM roofline estimate falls from 221.6 to 207.1~tok/s. On
B200/Qwen3-32B, the larger model shifts all engines closer to the same
bandwidth-dominated regime. Grout remains highest on the generation
sweep, reaching 80.1~tok/s at $t_g{=}8192$ (66.7\% of the HBM roofline
estimate), compared with 77.5~tok/s for vLLM and 76.5~tok/s
for SGLang. At the longest prompt length ($p_p{=}8192$), all three
converge in decode throughput.

Grout's advantage comes from a compact single-request runtime and
Qwen3-aware fusion around the bandwidth-dominated decode path. Decode
uses CUDA Graph replay, device-side greedy token selection, and minimal
scheduling and cache-management overhead; the kernel path fuses
QK-norm, RoPE, KV-cache writes, tuned GQA decode attention, and
split-K merge kernels. The dense QKV and Gate+Up projections are
merged linear layers that dispatch to cuBLAS. The forward pass is
expressed through \lstinline|DeviceOp| composition and
\lstinline|CudaGraph::scope|.

These results demonstrate that cuTile Rust
is expressive enough to provide the foundation for a model-specialized
inference engine that delivers performance competitive with
state-of-the-art inference systems. Grout is not a general-purpose
serving-stack replacement, but it is a real end-to-end GPU application.
We believe this demonstrates that programmers can reap the benefits of
Rust without compromising the performance of their GPU-accelerated
applications.

%% file: sections/related_work.tex
\section{Related Work}
\label{sec:related}

\paragraph{GPU programming in Rust.}
Rust-CUDA~\cite{rust_cuda}, rust-gpu~\cite{rust_gpu}, and
cuda-oxide~\cite{cuda_oxide} show that Rust can compile GPU device code
and have helped establish Rust as a viable language for GPU kernels. cuTile
Rust focuses on the complementary launch and tensor-access model: preserving
Rust's \lstinline|&mut|/\lstinline|&| aliasing contract across launch and
into a tile-based kernel body. Candle~\cite{candle}
and Burn~\cite{burn} provide safe host-side tensor APIs for Rust ML
applications. CubeCL~\cite{cubecl} supports Rust kernel authoring across GPU
backends through a SIMT-style model; cuTile Rust explores a tile-based design
with ownership-based partitioning
(\S\ref{sec:design:h2d}, \S\ref{sec:design}). CUDA~C++ supports both
ahead-of-time cubins and driver JIT from PTX to SASS~%
\cite{cuda_programming_guide}; cuTile Rust's runtime specialization starts
from a typed kernel representation whose aliasing facts must survive until the
JIT-generated entry point constructs device-side view metadata.

\paragraph{Tile-based GPU programming.}
Triton~\cite{tillet2019triton} popularized tile-level GPU programming with a
Python DSL. Pallas~\cite{pallas} brings a similar tile-level kernel model to
JAX, and ThunderKittens~\cite{spector2024thunderkittens} embeds tile
primitives in CUDA~C++; both prioritize performance and productivity rather
than static safety guarantees. cuTile Rust follows the
same tile-level direction while moving the
launch boundary into Rust and carrying tensor/partition view distinctions from
host types into the kernel. TensorIR~\cite{feng2023tensorir} is another
compiler abstraction for tensorized programs; cuTile Rust focuses instead on a
safe Rust surface over a Tile~IR backend. CUDA Tile~C++ and cuTile Python~%
\cite{cutile_python, cuda_tile} share cuTile Rust's Tile~IR backend\@. cuTile
Python provides a productive DSL; cuTile Rust adds compile-time safety checks
by making shape and ownership facts ordinary Rust type and borrow facts
available before launch.

\paragraph{Safe heterogeneous programming.}
Descend~\cite{descend2023} is closest in spirit: it uses type-level tracking
of thread hierarchy and memory regions to prevent GPU data races. Descend
targets SIMT with a custom type system, while cuTile Rust targets tile-based
programming with ownership-based partitioning in Rust's existing type system.
Mojo~\cite{mojo} adopts ownership and borrowing in a Python-superset language
for heterogeneous hardware; like Descend, it builds a new language, whereas
cuTile Rust works within Rust and its ecosystem.
GPUVerify~\cite{betts2012gpuverify} verifies CUDA/OpenCL kernels, while
Regent~\cite{slaughter2015regent} uses logical regions to structure
task-level parallelism; cuTile Rust instead builds kernel-local safety into the
Rust launch and tile APIs.
Halide~\cite{ragan2013halide} prevents races through a declarative
algorithm/schedule split, and Futhark~\cite{henriksen2017futhark} uses purity
to structure parallel programs. Prism~\cite{bansal2025prism} tracks thread
granularity for modular GPU programming. cuTile Rust addresses a different
axis: \emph{what} memory is accessed and in what order.

\paragraph{Rust safety and verification.}
RustBelt~\cite{jung2018rustbelt} provides formal foundations for Rust's
safety guarantees. cuTile Rust builds on Rust ownership but extends it to GPU
memory across the launch boundary. Rudra~\cite{bae2021rudra} shows that
\lstinline|unsafe| Rust is a real ecosystem-scale bug source, motivating a
narrow unsafe surface for GPU kernels. Abdi et al.~\cite{abdi2024fearless}
analyze when Rust parallelism is fearless and zero-cost; cuTile Rust applies
that question to GPU kernel programming.

%% file: sections/conclusion.tex
\section{Conclusion}
\label{sec:conclusion}

The tile programming model creates a natural correspondence with Rust's
ownership system. Mutable parameters become disjoint partitioned accesses,
immutable references become shared tensor views, and generated host
interfaces and device entry code preserve that contract across the CPU/GPU
boundary. cuTile Rust
builds on this correspondence to provide data-race-free GPU kernel
programming for the tensor access patterns it accepts. Branded
\lstinline|PartitionIndex| values and bounded dimension iterators let the
safe tensor API express schedules where each tile program visits multiple
output sub-tensors;
\lstinline|unchecked_accesses| and raw pointers remain explicit local
opt-outs when the programmer supplies invariants manually.

Three system contributions make this possible. First, a safe,
high-performance GPU programming model for Rust: host tensors become
device-side tensor and partition views, tile kernels execute with
single-threaded semantics, and branded bounded indices let the front end
prove common partition accesses safe. Second, a safe kernel launch interface
preserves Rust's \lstinline|&mut|/\lstinline|&| contract as host tensors
cross the CPU/GPU boundary. Third, a composable host-side execution model
supports synchronous, asynchronous, and CUDA graph execution modes,
including scoped graph capture validated by the borrow checker. The
evaluation supports these claims: On GEMM,
the safe mapped kernel matches a raw-pointer variant and
reaches 96.4\% of cuBLAS on the B200. Grout, a Qwen3 inference engine
developed in collaboration with Hugging Face, exercises cuTile Rust
across an end-to-end inference path and reaches single-request
throughput on par with the vLLM and SGLang baselines on both the NVIDIA
GeForce RTX~5090 and the DGX B200.

The composable execution model also supports resource-conscious
orchestration of heterogeneous workloads: async execution lets one host
thread keep GPU work in flight while servicing interleaved I/O and control
tasks instead of blocking on the device. This grows more important as
inference increasingly depends on tool calling, which interleaves host-side
work with generation; overlapping the two raises utilization without
dedicating a thread to spin on the GPU. Quantifying the CPU and power
effects for embedded and server workloads remains future work.

Our approach inherits the limitations of the tile model. Tile-based
programming gives up SIMT-level control (explicit warp primitives, shared
memory management) in exchange for the single-threaded semantics that make
static safety checking tractable. Tile~IR handles implicit warp
specialization, so the loss is mitigated but not eliminated; a clean safe
SIMT model that composes with tile programming remains future work.
Additional limitations include a GEMM performance gap with cuBLAS at some
matrix sizes (Grout falls back to cuBLAS for model GEMMs), a young tensor
API whose coverage still does not eliminate every raw pointer use, and a specialized
case study (batch-1 engine with a small set of supported models).

Future work follows from those boundaries. Growing the safe tensor API to
subsume the patterns that still require \lstinline|unsafe| is one direction:
The zero-cost GEMM result suggests that kernels like Grout's attention and
fused norms can move behind the same branded-index and bounded-iterator
abstractions without losing performance. 
Async scheduling for heterogeneous workloads is another, especially
where GPU work, host work, I/O, and control-flow responsiveness must overlap.
The same partitioning idea can also extend across devices, with each GPU owning
a partition and collective/P2P primitives participating in the borrow system.

%% file: sections/acknowledgements.tex
\begin{acks}
We thank Gonzalo Brito for design discussions and support for tile programming
in Rust, Simon Cooksey for validating the safety properties of our DSL,
Jason Knight for support and direction, Nihal Pasham for
early contributions, and the PSA and ARG research groups and the Tile~IR and
CUDA teams for support. We also thank Jed Brown for discussions on safe GPU
programming in Rust.
\end{acks}

%% file: sections/appendix.tex
\appendix

\section{Data-Race Freedom}
\label{app:drf}

cuTile Rust's safe tensor API wraps Tile~IR's load and store operations. Beyond
data-race freedom, those operations carry only Tile~IR's \emph{validity}
requirement, in-bounds access to live, initialized memory, which is the ordinary
memory-safety obligation that Rust's ownership and bounds discipline already
guarantee; Tile~IR adds nothing there to prove. This section discharges the one
requirement specific to Tile~IR's weakly ordered memory model, data-race
freedom, for any kernel and launch written in safe Rust. The opt-outs of
\S\ref{sec:design:escape} (\lstinline|unchecked_accesses| and raw pointers)
carry the requirement manually and are excluded, and the argument assumes the
generated launcher and kernel entry realize the disjoint partitions and token
threading of \S\ref{sec:design:partitioning}.

\paragraph{Memory model.}
We use Tile~IR's definitions~\cite{cuda_tile}. A kernel runs a grid of tile
programs (\S\ref{sec:bg:gpu-programming-model}), each a single logical thread
that executes as one \emph{tile block}, the granularity at which the model
orders memory. Two accesses \emph{conflict} when they touch the same location
and at least one is a write. Ordering is carried by tokens: An operation
\emph{waits-for} another when it depends on a token the other produced, and
\emph{restricted program order} is the intersection of program order and this
token order. Two same-location accesses are \emph{morally strong} when they are
related in restricted program order, or each names a scope that contains the
other's tile block. A \emph{data race} is a conflicting pair that is not related by
happens-before and not morally strong; a program containing a data race has
undefined behavior.

\paragraph{Two facts from safe Rust.}
Safe code supplies exactly the two properties the definition needs. First,
\emph{alias-XOR-mutability across tile programs}: A tile program writes only
through its partition view, and the borrow checker keeps that view's sub-tensors
disjoint from every other program's accessed memory. Partition maps are
injective (\S\ref{sec:design:partitioning}); the mapped case is checked through
branded \lstinline|PartitionIndex| values (\S\ref{sec:design:escape}); and an
immutable \lstinline|&Tensor| is read-only and cannot alias any partition,
because a partition derives from an owned tensor or an exclusive borrow
(\S\ref{sec:design:h2d}). Second, \emph{ordering within a tile program}: The
compiler threads a token chain through every load and store on a
\lstinline|&mut| view (\S\ref{sec:design:partitioning}), so any two such
accesses are related in both program order and token order.

\begin{theorem}
Every execution of a kernel written in safe cuTile Rust is data-race-free under
Tile~IR's memory model.
\end{theorem}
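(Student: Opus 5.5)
The proof goes by contradiction on an arbitrary conflicting pair. Fix an execution of a safe kernel and launch. Take any two memory accesses $a_1, a_2$ that conflict: they touch the same location $\ell$ and at least one of them, say $a_1$, is a write. By the definition of a data race in the memory-model paragraph above, it suffices to show that $a_1$ and $a_2$ are related by happens-before or are morally strong. I would split into two cases according to whether the two accesses come from the same tile program or from different ones.

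\emph{Different tile programs.} This case should be vacuous. The only writes safe code can issue are stores through a partition view, which is either a \lstinline|&mut Tensor| or a \lstinline|MappedPartitionMut| accessed with a branded \lstinline|PartitionIndex|. So $\ell$ lies in some sub-tensor owned by the program $P_1$ issuing $a_1$.
\begin{itemize}
\item If $a_2$ is an access to a partition by a different program $P_2$, it touches only $P_2$'s sub-tensors. Injectivity of the partition map, or for the mapped case the fact that branded indices range only over the executing program's subset (\S\ref{sec:design:escape}), makes these disjoint from $P_1$'s sub-tensors.
\item If $a_2$ is a read through a \lstinline|&Tensor|, that tensor cannot alias any partition. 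A partition is derived from an owned tensor or an exclusive borrow (\S\ref{sec:design:h2d}), so the borrow checker forbids a live shared reference to the same allocation in the same launch.
\end{itemize}
Either way $a_2$ cannot touch $\ell$, which contradicts the assumption that the pair conflicts. This is exactly the first of the ``two facts from safe Rust.'' Reads of immutable tensors by different programs never conflict, since neither is a write.

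\emph{Same tile program.} Here $a_1$ writes to $\ell$ through that program's \lstinline|&mut| view, and $a_2$ touches $\ell$. By the aliasing argument above, $a_2$ cannot be a shared-view read, so it is also an access through the same mutable view. By the second fact, the compiler threads a single token chain through every load and store on that view. The two accesses are therefore related in program order and, transitively along the chain, in waits-for order. Hence they are related in restricted program order, which makes them morally strong. (I would also note that restricted program order is included in happens-before in the Tile~IR model, so either disjunct of the definition applies.)

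The main obstacle is the different-programs case for mapped partitions. The claim that a program's branded indices enumerate a set disjoint from every other program's needs an argument about the runtime partition map, not only the type system. I would discharge it by appealing to the launcher's validation of tile-program count against the partition grid (\S\ref{sec:design:partitioning}) together with the map being a bijection from (program, iteration) pairs onto sub-tensors. I would also state explicitly that this rests on the assumption that the generated launcher and entry code realize that map and the token threading.
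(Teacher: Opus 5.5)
Your proposal is correct and follows the paper's proof. A conflicting pair across tile programs is impossible because partition views are disjoint (injective maps, branded indices, and shared tensors never aliasing a partition), and within one tile program the write goes through that program's mutable view, whose token chain places both accesses in restricted program order and so makes them morally strong. You are slightly more explicit than the paper in ruling out a same-program shared read aliasing the written location, and for the mapped case only injectivity of the partition map is needed, not bijectivity.
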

\begin{proof}
Consider a conflicting pair $a,b$; at least one is a write. If they belong to
different tile programs, the writer acts through its partition view, which by
alias-XOR-mutability is disjoint from the other program's accessed memory, so
the two cannot share a location, contradicting conflict. Every conflicting pair
therefore lies in a single tile program. There, a write occurs only through
that program's \lstinline|&mut| view, and all accesses to it are related in
program order and token order, hence in restricted program order; $a$ and $b$
are thus morally strong and form no data race. Accesses through immutable views
are reads, which do not conflict.
\end{proof}

\noindent This is the argument that makes ordinary Rust data-race-free:
Alias-XOR-mutability removes conflicts between threads and ordering removes
them within a thread, with Tile~IR's token order (``waits-for'') in the role
that sequenced-before plays in the C++ and Rust memory models. The index-swap
bug of \S\ref{sec:design:race} is the canonical execution it excludes: A store
in safe Rust has no programmer-chosen destination index, so the conflicting
write to another program's sub-tensor is unconstructible rather than merely
unordered.

\section{Supplementary Work-Distribution Data}
\label{app:exp3_extras}

This appendix collects the supplementary execution-mode and
work-distribution plots referenced in \S\ref{sec:eval:execution},
together with the reproducibility details omitted from the main text.

\begin{figure*}[t]
  \centering
  \begin{subfigure}[b]{0.32\textwidth}
    \centering
    \IfFileExists{figures/generated/exp2_async_throughput.pdf}{%
      \includegraphics[width=\textwidth]{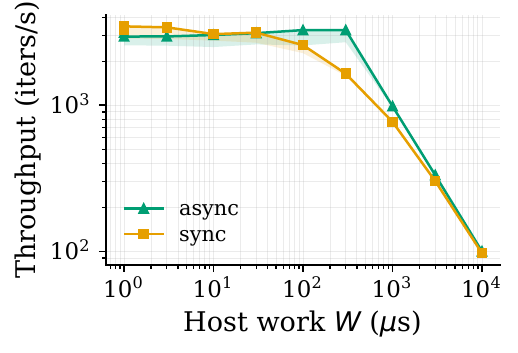}%
    }{%
      \fbox{\parbox{0.9\textwidth}{\centering\vspace{2em}
        \textbf{[Async throughput placeholder]}\\[0.4em]
        sync vs async over host-work $W$.
      \vspace{2em}}}%
    }
    \subcaption{Async vs.\ sync at varying host work.}
    \label{fig:async_throughput}
  \end{subfigure}
  \hfill
  \begin{subfigure}[b]{0.32\textwidth}
    \centering
    \IfFileExists{figures/generated/exp3_bimodal_throughput.pdf}{%
      \includegraphics[width=\textwidth]{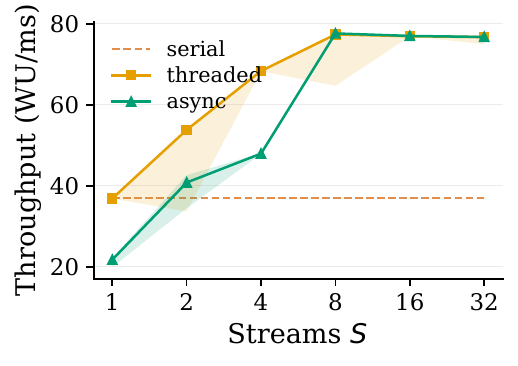}%
    }{%
      \fbox{\parbox{0.9\textwidth}{\centering\vspace{2em}
        \textbf{[Work distribution placeholder]}\\[0.4em]
        serial / threaded / async over $S$.
      \vspace{2em}}}%
    }
    \subcaption{GPU work distribution (bimodal GEMMs).}
    \label{fig:bimodal_throughput}
  \end{subfigure}
  \hfill
  \begin{subfigure}[b]{0.32\textwidth}
    \centering
    \IfFileExists{figures/generated/exp3_bimodal_w.pdf}{%
      \includegraphics[width=\textwidth]{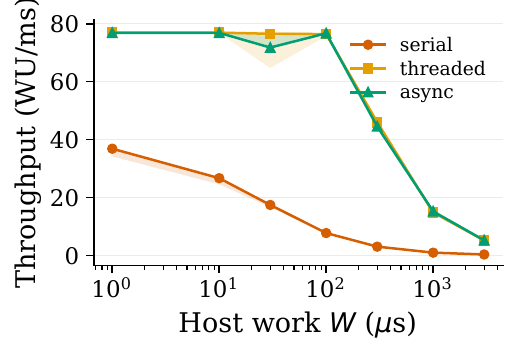}%
    }{}
    \subcaption{Per-task host-work sweep.}
    \label{fig:bimodal_w}
  \end{subfigure}
  \caption{Supplementary async execution and work-distribution results.
    \textbf{(a)} Graph-replayed pipeline throughput while varying
    host-side work $W$.
    \textbf{(b)} Bimodal-GEMM queue throughput under serial,
    thread-per-stream, and async scheduling.
    \textbf{(c)} Bimodal-GEMM throughput with per-task host work
    $W$ at fixed $S{=}16$ streams.}
  \label{fig:work_distribution_appendix}
\end{figure*}

\paragraph{Async overlap sweep.}
Figure~\ref{fig:exec_mode} establishes that async adds a constant
callback offset on top of sync for a single pipeline. Whether paying
that offset is worthwhile depends on whether the host thread, once
yielded, has useful work to do. We run a fixed GPU pipeline ($N{=}300$
captured graph, ${\sim}300~\mu$s per replay) and vary per-iteration
host work $W$: Async schedules the host spin as cooperative work so it
overlaps GPU execution, while sync runs it strictly after
\lstinline|sync_on|.

Today's single-model decode engines pair a captured graph with a
synchronous outer loop; per-step host work is on the order of a
microsecond and dwarfed by launch cost, so async offers nothing
there. Async becomes the right tool when the host thread has
something concurrent to do during the yield. Streaming ASR is a
representative case: while one GPU chunk runs, the host can ingest the
next audio window, run VAD and control logic, publish partial output,
or react to an interrupt without parking an extra thread on GPU
completion.

Figure~\ref{fig:async_throughput} shows the regime split. Below the
${\sim}30~\mu$s crossover, sync is strictly faster: Async pays its
callback tax for no gain and should be avoided (or replaced by a
captured graph, which removes the launch side of the cost entirely).
Around $W\approx$~GPU time the host spin slots fully inside the GPU
replay and async hits $1.99\times$ sync throughput; the $2\times$
ceiling follows from the physics of one host thread overlapping with
one device stream. cuTile Rust does not prescribe the execution regime:
Offering sync, async, and captured graphs through one trait lets each
caller choose where on the trade-off curve their workload sits without
rewriting the kernel side.

\paragraph{Bimodal work-distribution setup.}
The main text isolates host/device overlap on a single stream. In real
systems the more common regime is the other axis: many GPU work units
of heterogeneous sizes arriving over time, to be issued across multiple
streams. We generate a queue of $N{=}2000$ GEMMs, $80\%$ Small
($M{=}N{=}K{=}512$) interleaved with $20\%$ Large
($M{=}N{=}K{=}2048$), and drain it under three strategies, sweeping
stream count~$S\in\{1,2,4,8,16,32\}$:

\begin{enumerate}
  \item \textbf{Serial} --- one host thread, one stream,
    \lstinline|sync_on| per op. Establishes the single-stream
    ceiling.
  \item \textbf{Threaded} --- $S$ host threads, one stream each,
    \lstinline|sync_on| per op on each thread. A shared queue feeds
    all workers; this is the non-async baseline.
  \item \textbf{Async} --- one host thread driving $S$ cooperative
    async tasks, one stream each. Each task
    \lstinline|.await|s on \lstinline|DeviceFuture::scheduled|
    so launches stay on the intended stream. This is the
    ``single host thread'' configuration used to reduce host CPU
    footprint.
\end{enumerate}

\paragraph{Reproducibility details.}
Each worker pre-allocates per-size buffers, worker threads/tasks
persist across samples (so the thread-local kernel JIT cache warms
once), and we fix \lstinline|CU_CTX_SCHED_SPIN| on the context so
sync behavior is identical across modes. CPU pinning is
\emph{mode-specific}: Serial pins to a single core; async and
threaded both pin to CPUs 1--16 (the machine's 16 physical cores,
avoiding SMT siblings). For async, the range matters because CUDA
driver callback threads live in the process and a single-core pin
would force them to contend with the async runtime for that core.
For threaded, the range provides one core per worker.

\paragraph{Bimodal throughput.}
Figure~\ref{fig:bimodal_throughput} shows threaded and async
scale past the single-stream serial baseline (${\sim}37$~WU/ms)
and converge at ${\sim}77$~WU/ms (${\sim}2.1\times$ serial) by
$S{=}8$ --- the GPU's multi-stream ceiling for this workload.
Async plateaus there and holds through $S{=}16$ and $S{=}32$;
threaded reaches the same ceiling at $S{=}8$ and holds through
$S{=}32$ (at $S{>}16$ its $S$ worker threads are over-subscribed
on the pinned 16-physical-core range, which is a setup limit
rather than a hardware one). At $S{=}1$ async trails serial and
threaded because the async callback path adds cost
with no concurrency to amortize it across. In the mid-range
($S{=}2$--$4$) thread-per-stream is meaningfully ahead of async
on raw throughput. At and beyond $S{=}8$ all modes are within
sampling noise.

The throughput numbers themselves converge at the plateau; the
substantive gap is \emph{how} each mode reaches it. Threaded
needs one OS thread per stream --- at $S{=}16$ that is sixteen
blocking threads spread across sixteen CPU cores, each spending
most of its time in \lstinline|cuStreamSynchronize|. Async
does the same submission rate on a single host thread with
cooperative scheduling: when a task awaits, the runtime polls
the next ready task and the host thread keeps issuing launches
on other streams, while GPU completion wakes the waiting tasks
via \lstinline|cuLaunchHostFunc|. The CPU-footprint delta is
the programming-model payoff: a fleet driven with async needs
roughly $1/S$ the host threads (and thus fewer host cores, with
the corresponding reduction in idle CPU power) of one driven
with thread-per-stream sync. That property matters most when host
cores are scarce: edge conversational systems and other embedded
GPU applications can remain responsive to I/O and cancellation
without reserving a core per active stream.

\paragraph{Bimodal host-work sweep.}
Figure~\ref{fig:bimodal_w} reports throughput at fixed stream
count $S{=}16$ as per-task host work $W$ grows from $0$ to $3$~ms.
Each GEMM is followed by a host-side ``think-time'' of $W$
microseconds, modeling CPU work interleaved with GPU work. The
async path uses cooperative yielding against a wall-clock deadline, so
the task releases the executor at each yield and the 16 per-stream tasks
overlap their waits on a single host thread. We use this rather than a
timer sleep because millisecond-scale timer granularity is too coarse for
these measurements.
Serial and threaded use a busy spin, since neither has an
executor to yield to. Across four orders of magnitude of $W$,
async$-T{=}1$ tracks threaded ($S{=}16$ threads) within sample
noise: both degrade in lockstep as $W$ grows. Serial collapses
because a single thread cannot overlap host with device work. Adding
host work does not break the single-thread async executor's equivalence
to thread-per-stream on throughput; it only trades off against the
equivalent threaded cost.

Together, Figures~\ref{fig:async_throughput}
and~\ref{fig:bimodal_throughput} demonstrate that the async primitive
is meaningful along two axes of heterogeneity: across resources
(host $\leftrightarrow$ device) and across task units
(small $\leftrightarrow$ large). One host thread can orchestrate both
through the same \lstinline|DeviceOp| API. We evaluate synthetic
workloads so the overheads are isolated, but the intended use is
concrete: resource-constrained interactive systems where GPU work,
I/O, and control-flow responsiveness must coexist. Scheduler-specific
designs built on this primitive remain future work.
